\title{\LARGE \bf
The Impact of Road Configuration in V2V-based Cooperative Localization: Mathematical Analysis and Real-world Evaluation
}
\newtheorem{theorem}{Theorem}
\begin{document}

\author{Macheng Shen$^{1}$, Jing Sun$^{1}$ and Ding Zhao$^{2}$
\thanks{*This work is funded by the Mobility Transformation Center at the University of Michigan with grant No. N021548.}
\thanks{$^{1}$M. Shen and J. Sun are with the Department
of Naval Architecture and Marine Engineering, University of Michigan, Ann Arbor,
MI, 48109. (e-mail: macshen@umich.edu; jingsun@umich.edu)}%
\thanks{$^{2}$D. Zhao (corresponding author, e-mail: zhaoding@umich.edu) is with the University of Michigan Transportation Research Institute, Ann Arbor, MI, 48109. }%
}

\maketitle
\thispagestyle{empty}
\pagestyle{empty}
\allowdisplaybreaks

\begin{abstract}

Cooperative map matching (CMM) uses the Global Navigation Satellite System (GNSS) position information of a group of vehicles to improve the standalone localization accuracy. It has been shown, in our previous work, that the GNSS error can be reduced from several meters to sub-meter level by matching the biased GNSS positioning to a digital map with road constraints. While further error reduction is expected by increasing the number of participating vehicles, fundamental questions on how the vehicle membership within CMM affects the performance of the CMM results need to be addressed to provide guidelines for design and optimization of the vehicle network. This work presents a theoretical study that establishes a framework for quantitative evaluation of the impact of the road constraints on the CMM accuracy. More specifically, a closed-form expression of the CMM error in terms of the road constraints and GNSS error is derived based on a simple CMM rule. The asymptotic decay of the CMM error as the number of vehicles increases is established and justified through numerical simulations. Moreover, it is proved that the CMM error can be minimized if the directions of the roads on which the connected vehicles travel obey a uniform distribution. Finally, the localization accuracy of CMM is evaluated based on the Safety Pilot Model Deployment and Pillar dataset of Ann Arbor traffic flow collected over three years period. The contributions of this work include establishing a theoretical foundation for CMM as well as providing insight and motivation for applications of CMM.
\end{abstract}

\section{INTRODUCTION}
Low-cost Global Navigation Satellite Systems (GNSS) are used for most mobile applications, whose localization accuracy are typically in the range of several meters. The limited accuracy of the low-cost GNSS is due to the atmospheric error and satellite error, collectively referred to as the common error, as well as receiver noise and multipath error, referred to as the non-common error.\\
\indent Enhancement of standalone GNSS has been an active research area in the recent years. Various approaches have been developed, such as sensor fusion (\cite{ITS_Liu}, \cite{CloudTech}, \cite{Li}, \cite{ACC}), cooperative localization (\cite{ITS_Cruz}, \cite{IEEE_magazine}, \cite{IV}) and map matching (\cite{IAIN}, \cite{rohani2016novel}, \cite{shen2016enhancement}, \cite{ITS_Knoop}). For instance, Liu et al. \cite{ITS_Liu} proposed a novel robust cubature Kalman filter to improve the performance of data fusion with uncertain sensor observations under a GNSS/Dedicated Short Range Communications (DSRC) integrated architecture for cooperative localization. Cruz et al. \cite{ITS_Cruz} employed a two-stage Bayesian filter to track the vehicle's position that fuses information from smartphone inertial sensors, V2V signal strength measurements, velocity, Global Positioning System (GPS) position and map information. Knoop et al. \cite{ITS_Knoop} conducted a test for lane determination with GPS Precise Point Positioning which exploited the fact that most driving is within a lane so as to achieve a decimeter accuracy. \\
\indent These works present promising improvements on GNSS localization, while most of them utilize additional information from different sources. Improving the localization accuracy of these widespread GNSS without incurring additional hardware and infrastructure costs has motivated recent research activities on Cooperative Map Matching (CMM) \cite{rohani2016novel}, \cite{shen2017improving}, \cite{shen2016enhancement}. CMM has been shown capable of improving Global Navigation Satellite System (GNSS) positioning of a group of connected vehicles through estimation and correction of the GNSS common localization error. The estimation error for the common localization error, which determines the accuracy of the CMM localization, is hereafter referred to as the CMM error. 

\begin{figure}[htbp]
  \centering
  \includegraphics[width=0.95\columnwidth]{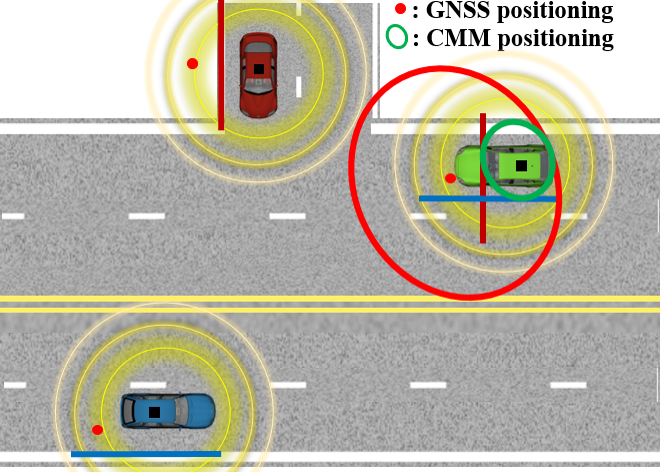}   
  \caption{Illustration of the positioning correction through CMM}
  \label{fig. CMM}
\end{figure}
%
%
\indent Fig. \ref{fig. CMM} illustrates how error-corrupted GNSS information can be corrected through a naive CMM approach. The GNSS solutions, denoted as red dots, of all the three vehicles are biased away from their true positions, while the biases are highly correlated. Virtual road constraints from the red and the blue vehicles are applied to restrict the positioning of the green vehicle. The underlying assumption for the application of the virtual constraints is that the GNSS positioning biases of the three vehicles are the same. This example manifests the impact of road constraints on the CMM localization. \\   
\indent It is desirable to quantify this impact so that we can estimate the CMM accuracy as well as optimize the vehicular network to achieve maximal error reduction. In our previous conference paper \cite{shen2017impact}, the correlation between the CMM error and the richness of the road constraints is quantified analytically. More specifically, a closed form expression of the CMM error in terms of the road angles as well as the GNSS error is derived based on a simple CMM rule that neglects the probabilistic property of the GNSS error.\\  
\indent In this work, we extend our previous work with a theoretical analysis on the optimal distribution of road directions associated with the vehicles that participate in the CMM and evaluation of the CMM accuracy to data collected in Ann Arbor. These results constitute the theoretical foundation of CMM as well as provide a guideline for the implementation of CMM.\\
\indent The organization of this paper is described as follows. In Section 2, an analytic expression of CMM as a function of the road configuration and the non-common error is derived. In Section 3, asymptotic formulas of the expectation of the CMM error with respect to Gaussian distributed non-common error are derived for uniformly or Bernoulli distributed road directions. In Section 4, simulation results are presented to demonstrate and justify the applicability of the theoretical results. In Section 5, it is proved that the uniform distribution of road angles associated with participating vehicles is optimal for minimizing the mean square CMM error. In Section 6, the localization accuracy of CMM is evaluated based on the Safety Pilot Model Deployment and Pillar 1 (SPMDP1) dataset collected in Ann Arbor over three years period. In Section 7, the contributions and conclusions are summarized.
\section{Derivation of the CMM error }
In this section, we propose a framework of vehicle positioning within a reference road framework to facilitate the analytic investigation.\\
\indent The following assumptions are essential in our exposition:
\begin{enumerate}
 \item The GNSS common error is the same for all the connected vehicles within the vehicular network. 
 \item The road side can be locally approximated as a straight line.
 \item The GNSS non-common error is random with Gaussian distribution.
 \item The GNSS non-common error is small enough such that the exact expression for the CMM error can be approximated by its first order linearization with respect to the non-common error.\\
 \end{enumerate}
 \indent The first assumption is often satisfied as long as the connected vehicles are geographically close to each other, for example, within several miles. The second and the fourth assumptions are made for mathematical convenience. If they are violated, however, the exact expression of the CMM error will still be valid but the asymptotic approximation will be inaccurate. The third assumption has been experimentally verified in \cite{shen2017improving} under open sky conditions.
 
\begin{figure}[htbp]
  \centering
  \includegraphics[width=0.8\columnwidth]{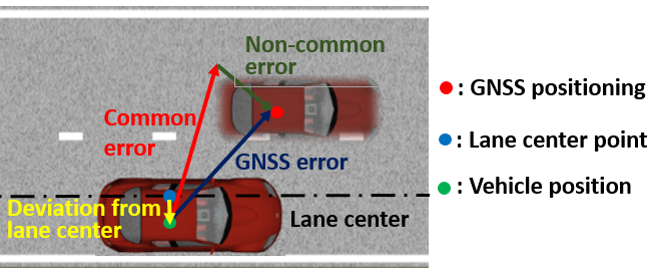}   
  \caption{Illustration of the notations used in CMM}
  \label{notation}
\end{figure}
\indent Consider a network of connected vehicles consisting of $N$ vehicles. The coordinate of GNSS positioning of the $i$-th vehicle, $x^G_i$, can be decomposed into a superposition of four elements: the coordinate of a point on the corresponding lane center, $x^L_i$, the deviation of the vehicle from the lane center, $X^D_i$, the GNSS common error, $x^C_i$, and the GNSS non-common error, $x^N_i$, as mathematically expressed in (\ref{eq1}) and graphically illustrated in Fig. \ref{notation}. The grayscale image of the vehicle represents the GNSS positioning.
\begin{equation}
x^G_i=x^L_i+x^D_i+x^C+x^N_i, i=1,2,...,N,
\label{eq1}
\end{equation}
\indent The fact that all the vehicles travel on the roads can be expressed as a set of inequalities
\begin{equation}
g_i(x^G_i-x^C-x^N_i)<0.
\label{eq2}
\end{equation}
\indent Applying the second assumption, the constraint functions $g_i$ have simple analytic forms
\begin{equation}
g_i(x)=(x-x^L_i)\cdot n_i-w,
\end{equation}
where $\{\cdot\}$ is the dot product operator, $n_i$ is the unit vector normal to the lane center point towards outside of the road and $w$ is the half width of the lane. \\
\indent The intersection of all the inequalities in (\ref{eq2}) can be interpreted as the feasible set of the common error given the GNSS positioning and the non-common error. The non-common error is, however, unknown to the implementation of CMM. Thus, the following approximation of the feasible set by neglecting the non-common error is used instead of the exact feasible set,
\begin{equation}
\begin{aligned}
\Omega &=\{\tau|\bigcap\limits_{i=1}^{N}g_i(x^G_i-\tau)<0\}\\
&=\{\tau|\bigcap\limits_{i=1}^{N}g_i(x^L_i+x^C+\tilde{x}^N_i-\tau)<0\}\\
&=\{\tau|\bigcap\limits_{i=1}^{N}\tilde{g}_i(x^C+\tilde{x}^N_i-\tau)<0\},
\end{aligned}
\label{eq4}
\end{equation}
where 
\begin{equation}
\tilde{x}^N_i \triangleq x^D_i+x^N_i
\end{equation}
and 
\begin{equation}
\tilde{g}_i(x)\triangleq g_i(x+x^L_i)=x\cdot n_i-w.
\end{equation}
Eq. (\ref{eq1}) has been applied in (\ref{eq4}) to derive the second equality from the first one. A point estimator of the common error is taken as the average over the approximate feasible set $\Omega$,
\begin{equation}
\hat{x}^C=\frac{1}{S}\int\limits_\Omega \tau dA, S=\int\limits_\Omega dA,
\end{equation}
where $\tau$ is the dummy integration variable and $dA$ is the area element.\\
\indent The estimation error, that is the difference between the true common error and the estimated common error (referred to as CMM error), is of practical interest, which can be evaluated,
\begin{equation}
\begin{aligned}
e&=x^C-\hat{x}^C=x^C-\frac{1}{S}\int\limits_\Omega \tau dA\\
&=\frac{1}{S}\int\limits_\Omega (x^C-\tau) dA=\frac{1}{S}\int\limits_{\Omega'} \tau' dA,
\end{aligned}
\label{eq8}
\end{equation}
where 
\begin{equation}
\tau'=x^C-\tau,
\end{equation}
and
\begin{equation}
\Omega'=\{\tau'|\bigcap\limits_{i=1}^{N}\tilde{g}_i(\tilde{x}^N_i+\tau')<0\}.
\label{eq10}
\end{equation}
\indent Eq. (\ref{eq8}) and (\ref{eq10}) states that the CMM error equals to the geometric center of the intersection of the road constraints perturbed by the composite non-common error $\tilde{x}^N_i$. \\
\indent The expectation of the square CMM error with respect to Gaussian non-common error is of practical interest. In two special cases, analytic approximations to this expectation valid for large number of vehicles can be established.
\section{Asymptotic Analysis for the expected square error}
In this section, the expected square error is derived with two different road configurations. The road configuration is specified by the driving directions of all the roads along which the participating vehicles travel.
\subsection{Orthogonal road directions}
In the first case, it is assumed that each road is parallel to one of the two orthogonal axes of the global reference frame. As a result, the direction angles $\theta$ of the roads relative to the reference frame belong to a set with four elements. Without loss of generality, we define the reference frame to be aligned with the road directions, such that the angles of driving direction
\begin{equation}
\theta_i\in \{0,\frac{\pi}{2}, \pi,\frac{3\pi}{2}\}
\end{equation}
\indent This case can be viewed as a simplified model for the urban areas where most roads are orthogonal.\\
\indent Invoking (8), the square error can be expressed analytically as
\begin{equation}
e^2=\frac{X^2_1+X^2_2+X^2_3+X^2_4-2X_1X_3-2X_2X_4}{4},
\end{equation}
where $X_j,j=1,2,3,4$ are the largest projections of the composite non-common error on each of the four normal vector:
\begin{equation}
X_j=max\{\tilde{x}^N_{j_1}\cdot n_{j},\tilde{x}^N_{j_2}\cdot n_{j},...,\tilde{x}^N_{j_{Nj}}\cdot n_{j}\}, j=1,2,3,4.
\end{equation}
$N_j, j=1,2,3,4$ are the numbers of vehicles traveling in each of the four directions.\\ 
\indent If all the composite non-common error, $\tilde{x}^N_i, i=1,2,...N$, is independent and identically distributed, then according to the Fisher-Tippett-Gnedenko theorem \cite{basrak2011fisher}, the limit distribution of $X_j$ for large $N_j$ is Gumbel distribution whose cumulative distribution function is given by
\begin{equation}
F(X_j)=exp(-exp(-(X_j-\mu_j)/\beta_j)).
\end{equation}
\indent Moreover, with the third assumption, The leading order of the normalization constants $\mu_j$ and $\beta_j$ are related to the variance of the Gaussian distribution $\sigma$ through \cite{david2003order}
\begin{equation}
\mu_j\sim\sigma\sqrt{2log(N_j)},\beta_j\sim\sigma\frac{1}{\sqrt{2log(N_j)}}.
\end{equation}
\indent Using the property of Gumbel distribution, the expectation of $e^2$ with respect to $X_j$ can be evaluated:
\begin{equation}
\begin{aligned}
E_X[e^2]&=\frac{\pi^2}{24}\sum_{j=1}^4\beta^2_j+\frac{1}{4}[\mu_1-\mu_3+\gamma(\beta_1-\beta_3)]^2\\ 
&+\frac{1}{4}[\mu_2-\mu_4+\gamma(\beta_2-\beta_4)]^2,
\end{aligned}
\end{equation}
where $\gamma\approx 0.5772$ is the Euler-Mascheroni constant.\\
\indent Considering the case that all the four $N_j$ are large number of the same order and using the asymptotic formulas (15), it can be readily shown that the first term in (16) is of $O(\frac{1}{log(N_j)})$ and the following two terms are of $O(\frac{1}{N_jlog(N_j)})$. Thus, the leading order asymptotic approximation is
\begin{equation}
E_X[e^2]\sim\frac{\pi^2\sigma^2}{48}\sum_{j=1}^4\frac{1}{log(N_j)}.
\end{equation}

\subsection{Uniformly distributed random road directions}
\indent In the second case, each direction angle is assumed to be randomly distributed within $[0,2\pi)$.
Due to the fourth assumption, Eq. (8) can be linearized with respect to the non-common error:
\begin{equation}
e=e_0+\Delta e=e_0+\frac{C\tilde{X}}{S_0},
\end{equation}
where 
\begin{equation}
e_0=\frac{1}{S_0}\int\limits_{\Omega_0} \tau' dA,
\end{equation}
\begin{equation}
\Omega_0=\{\tau'|\bigcap\limits_{i=1}^{N}\tilde{g}_i(\tau')<0\},
\end{equation}
\begin{equation}
\tilde{X}=[\tilde{x}^N_1\cdot n_1,\tilde{x}^N_2\cdot n_2,...,\tilde{x}^N_N\cdot n_N]^T,
\end{equation}
and 
\begin{equation}
C=S_0\frac{\partial e}{\partial \tilde{X}}.
\end{equation}
$C$ is a $2\times N$ matrix whose components are related to the geometric quantities of the road constraints.\\
\indent The condition under which the linearization (18) is valid is
\begin{equation}
||\tilde{X}||_\infty\ll\frac{2\pi w}{N},
\end{equation}
where $w$ is the half width of the lane.\\
\indent With the assumption that each non-common error obeys independent Gaussian distribution with zero mean, i.e., $\tilde{X}\sim N(0_{N\times 1}, diag(\sigma^2_1,\sigma^2_2,...,\sigma^2_N))$, the expectation of the square error is
\begin{equation}
E_X[e^2]=e^2_0+\frac{1}{S^2_0}tr(L^TC^TCL),
\label{24}
\end{equation}
where $L=diag(\sigma_1,\sigma_2,...,\sigma_N)$ is the Cholesky decomposition of the joint Gaussian covariance matrix.\\
\indent As both $e_0$, $S_0$ and $C$ depends on the road direction angles $\theta_i$, $E_X(e^2)$ is also a random variable.
It can be shown that the expectation of $E_X(e^2)$ with respect to $\theta_i, i=1,2,...,N$ is of $O(\frac{1}{N})$ (See Appendix). More specifically,
\begin{equation}
E_{\theta}[e^2_0]=\frac{2w^2}{9N}+o(\frac{1}{N})
\end{equation}
and
\begin{equation}
E_{\theta}[\frac{1}{S^2_0}tr(L^TC^TCL)]=\frac{3\sum_{i=1}^N \sigma^2_i}{2N^2}+o(\frac{1}{N})
\end{equation}
\section{Simulation Verification}
\indent In this section, simulation results are presented to justify the validity of the asymptotic formulas derived in Section 3. The expectations are calculated by averaging over 5000 samples of $e^2$. Each sample value for the orthogonal road case is calculated through two approaches. One approach is the analytic formula (12), and the other approach is an importance sampling Monte Carlo integration where the proposal distribution is a two-dimensional uniform distribution. Besides, the number of vehicles in each direction is the same. For the uniformly random road angle case, each sample value of $e^2$ is calculated by the Monte Carlo integration. The number of samples to implement each Monte Carlo integration is 10000. The road half width is $w=2\mbox{ }m$ in the simulation.\\
\subsection{Small non-common error: $\sigma=0.3\mbox{ }m$}
\begin{figure}[htbp]
  \centering
  \includegraphics[width=3.5in]{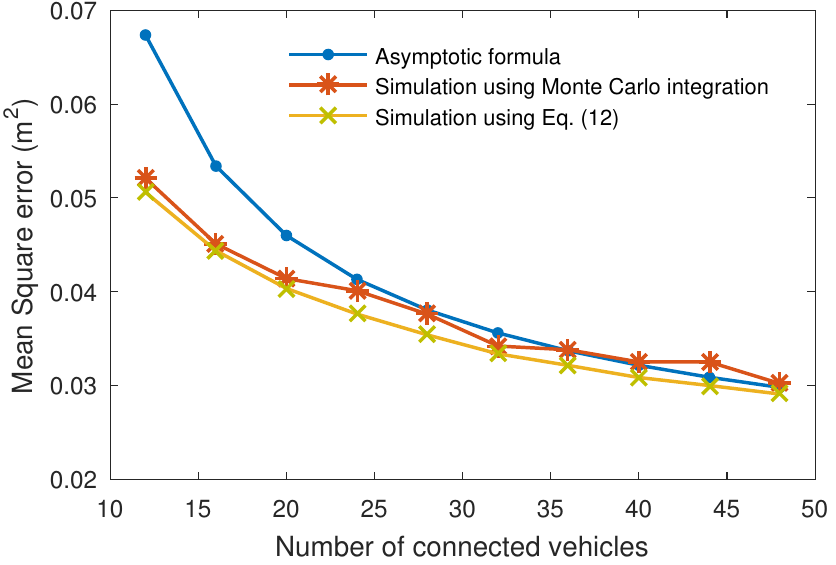}   
  \caption{Comparison between the asymptotic formula and numerical simulation results in the orthogonal road angle case}
  \label{fig. orthogonal}
\end{figure}
\indent Fig. \ref{fig. orthogonal} shows the comparison in the orthogonal road case. The asymptotic formula is in good agreement with the numerical results as the number of vehicles increases. The results using the two numerical approaches are also different, which should be caused by the random error resulted from the Monte Carlo integration used to calculate $e^2$. Therefore, the one that uses (8) is expected to be closer to the underlying true expectation. Compared with this result, the asymptotic formula slightly overestimates the error. This difference may be resulted from the fact that the convergence to the Gumbel distribution is rather slow \cite{david2003order}.

\begin{figure}[htbp]
  \centering
  \includegraphics[width=3.5in]{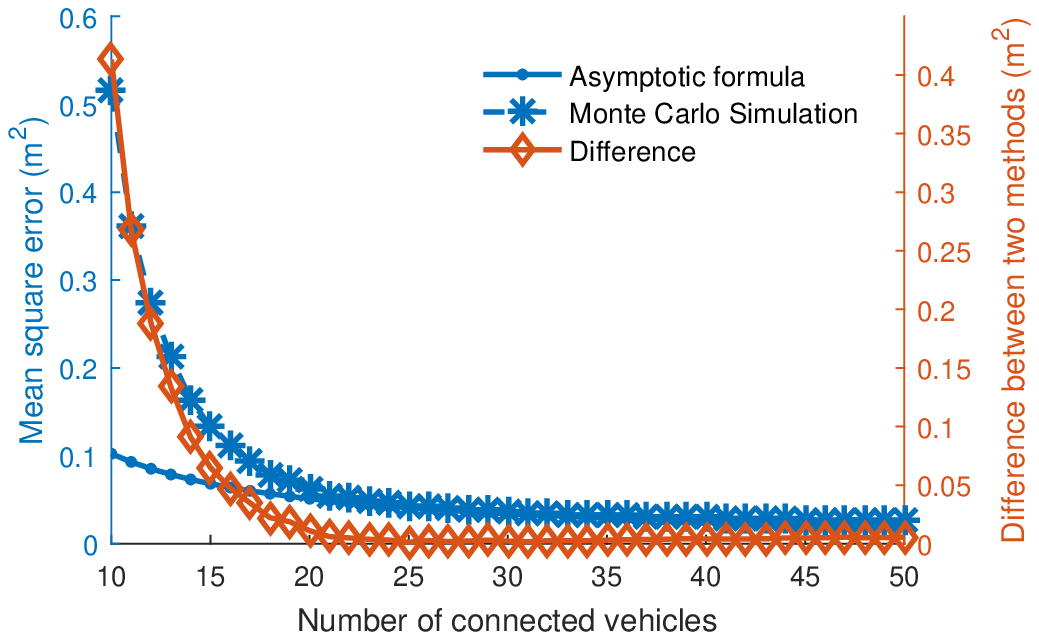}   
  \caption{Comparison between the asymptotic formula and numerical simulation results in the uniformly distributed random road angle case}
  \label{fig_3}
\end{figure}
\indent Fig. \ref{fig_3} shows the comparison in the random road angle case as well as the corresponding difference between the asymptotic formula and the simulation results using Monte Carlo integration. The difference reaches its minimum around $N=25\sim 30$ and increases with the further increase of $N$. This result can be attributed to the following two reasons. First, the asymptotic formula is derived for large $N$. As a result, the difference at small $N$ should be significant. Second, the linearization (18) based on the small non-common error assumption (23) eventually becomes invalid for fixed $\sigma$ and increasing $N$.

\subsection{Large non-common error: $\sigma=1\mbox{ }m$}
\indent The analytic results shown in Section 3 do not apply to the large non-common error case because the approximate feasible set described by (4) may be an empty set. Nonetheless, this problem can be addressed by assigning a weight to each hypothesis of the common error according to its compatibility with the road constraints. The weighted road map approach proposed by Rohani et. al. is applied to generate simulation results because of its simplicity for implementation.\\
\begin{figure}[tb]
  \centering
  \includegraphics[width=3.5in]{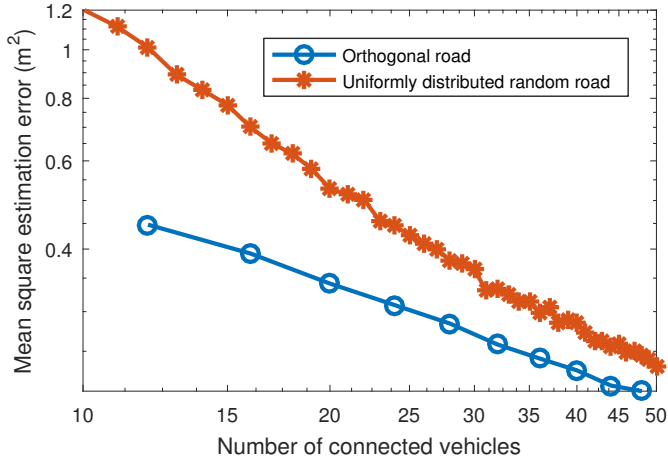}   
  \caption{Numerical results of the error decrement in the two cases of road configuration with large non-common error}
  \label{rf5}
\end{figure}
\indent Fig. \ref{rf5} shows the error decrement in the two different cases of road configuration when the non-common error variance is large. As can be observed from the figure, the slope of the decrement curve in the uniformly distributed road case is steeper than that corresponding to the orthogonal road case. This result is expected from intuition as in the former case, the road angles are more diverse, thereby, providing more constraints to correct the GNSS bias. This can also be understood from the mathematical expression (\ref{eq8}). The CMM error is equal to the deviation of the geometric center enclosed by the road constraints. As the directions of the road angles become diverse, there is a large probability that the error in different directions cancels out. As a result, the expectation of the error become small. In contrast, if there exist some dominant directions, the expectation of the error would be large as the probability that the error cancels out becomes small.
\subsection{Distribution of the mean square CMM error}
\begin{figure}[htbp]
  \centering
  \includegraphics[width=3.5in]{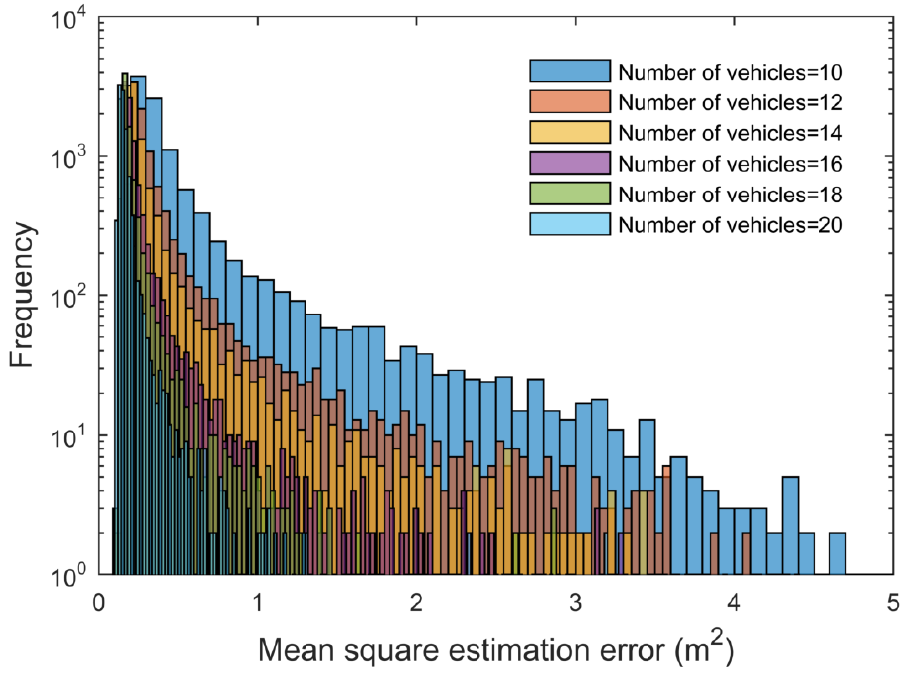}   
  \caption{Histogram of the mean square CMM error, with the road angles drawn from a uniform distribution and the non-common error variance $\sigma^2=0.5 \mbox{ } m^2$ }
  \label{sorted_error}
\end{figure}
The expectation of the mean error is insufficient to describe the full distribution of the error. The distribution of the error is obtained approximately by Monte Carlo simulation. For the number of vehicles ranging from 10 to 20, 10000 samples of road angles are drawn from the uniform distribution and the corresponding mean square CMM error is evaluated and sorted into increasing order shown in Fig. \ref{sorted_error}. As the number of vehicles increases, not only the average error decreases but also the probability of a significant error become small as indicated by the sharp tails of the distribution. 
\section{Optimality of the uniform distribution}
From the asymptotic analysis in Section 3, it is clear that the distribution of the road directions has a significant impact on the mean CMM error.\\
\indent In this section, the following theorem is proved, which quantifies the impact of the road direction distribution on the mean CMM error:

\begin{theorem}
Let the direction angles corresponding to all the connected vehicles be random variables that are independent from each other, obeying the same continuous distribution $p(\theta),\theta\in[0,2\pi)$, then the uniform distribution $p(\theta)=\frac{1}{2\pi}$ is a local minimizer of the mean square CMM error (\ref{24}) in the space of continuous distributions, as the number of connected vehicles $N\xrightarrow{}\infty$.
\end{theorem}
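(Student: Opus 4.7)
The plan is to treat the expected square error $F[p] := E_\theta[E_X[e^2]]$ as a functional on the space of continuous probability densities on $[0,2\pi)$ and to verify the second-order conditions for a local minimum at $p_0(\theta) \equiv 1/(2\pi)$ by calculus of variations. I would parametrize admissible perturbations by $p_\epsilon(\theta) = p_0(\theta) + \epsilon \eta(\theta)$ with $\int_0^{2\pi} \eta(\theta)\,d\theta = 0$ (to preserve total mass), and expand
\[F[p_\epsilon] - F[p_0] = \epsilon \int \eta(\theta) K_1(\theta)\,d\theta + \frac{\epsilon^2}{2} \iint \eta(\theta) K_2(\theta,\theta') \eta(\theta')\,d\theta\,d\theta' + O(\epsilon^3).\]
The goal is then to show that $K_1$ contributes zero after integration against any zero-mean $\eta$, and that the quadratic form defined by $K_2$ is non-negative.

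The first variation is disposed of by a symmetry argument. A rigid rotation of the global reference frame by an angle $\alpha$ sends the angle vector $(\theta_1,\ldots,\theta_N)$ to $(\theta_1+\alpha,\ldots,\theta_N+\alpha)$, rotates $\Omega_0$, $e_0$, and each $n_i$ by the same $\alpha$, and leaves $|e_0|^2$, $S_0$, and $tr(L^T C^T C L)$ all invariant, so that $F[p] = F[p(\cdot-\alpha)]$ for every $\alpha$. This forces $K_1$ to be rotation-invariant on the circle and therefore constant; the zero-mean constraint on $\eta$ then makes its contribution vanish identically. The same symmetry forces $K_2(\theta,\theta')$ to depend only on the difference $\theta-\theta'$, so $K_2$ acts as a convolution operator on the circle and is diagonal in the Fourier basis.

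What remains is to establish non-negativity of the Fourier coefficients $\hat{K}_2(k)$ for $k \neq 0$. To access them, I would first generalize the appendix derivation of (25) and (26) from the uniform case to an arbitrary continuous density $p$, obtaining an asymptotic expansion $F[p] = G[p]/N + o(1/N)$ in which $G[p]$ is a functional of $p$ built from geometric invariants of the convex polygon cut out by the constraint lines. A standard argument for order statistics of non-uniform samples places the local density $p(\theta)$ in the denominators of the leading terms, so $G[p]$ naturally involves reciprocal-type integrals such as $\int 1/p(\theta)\,d\theta$, and the desired non-negativity of the Hessian then follows from Jensen's inequality, with equality precisely at $p \equiv p_0$. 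The chief obstacle is carrying out this generalization cleanly: for non-uniform $p$ the gaps between consecutive sorted $\theta_{(i)}$'s are no longer exchangeable, so $e_0$, $S_0$, and the columns of $C$ must each be expanded to leading order in $1/N$ while tracking $p(\theta)$ along the polygon's boundary, and the $o(1/N)$ remainder must be controlled uniformly in the perturbation $\eta$. Once this asymptotic expansion is in hand, the theorem follows from the symmetry reduction of the previous paragraph.
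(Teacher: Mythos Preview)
Your proposal is correct and lands on the same endpoint as the paper, but the routes differ in emphasis. The paper proceeds by direct computation: it writes the leading term $e_0^2\approx\frac{w^2}{9\pi^2}\sum_i\tilde\theta_i^2$, uses the exponential nearest-neighbour spacing law $f(\tilde\theta)=2Np(\theta)\exp(-2Np(\theta)\tilde\theta)$ to get $E[\tilde\theta_i^2]=1/(2N^2p^2(\theta_i))$, and passes to the Riemann integral $E[e_0^2]\sim\frac{w^2}{36N\pi^2}\int_0^{2\pi}\frac{1}{p^2(\theta)}\,d\theta$. It then expands $1/p^2$ in a Fourier series about the uniform density and reads off $\int 1/p^2\,d\theta=8\pi^3+32\pi^5\sum_{m\ge1}|C_m|^2$, so any nonzero Fourier mode strictly increases the error.

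Your calculus-of-variations framing is more structural. The rotational-invariance argument is a cleaner way to dispose of the first variation than the paper's implicit cancellation, and the observation that $K_2$ is a convolution explains \emph{why} Fourier modes are the right coordinates (the paper simply uses them). Invoking Jensen on the convex map $p\mapsto 1/p^2$ would in fact give a global minimum, stronger than the local statement claimed. On the other hand, the step you flag as ``the chief obstacle'' --- extending the appendix asymptotics to non-uniform $p$ --- is precisely the nearest-neighbour computation the paper carries out, so the paper supplies exactly the calculation your outline defers. One small correction: the reciprocal integral that emerges is $\int 1/p^2(\theta)\,d\theta$, not $\int 1/p(\theta)\,d\theta$; the square comes from the second moment of the exponential spacing.
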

\begin{proof}
\indent Considering an arbitrary continuous distribution of the road angle $p(\theta),\theta\in[0,2\pi)$, the periodic condition should be satisfied,
\begin{equation}
p(0)=p(2\pi^-),
\end{equation}
as $\theta=0$ and $\theta=2\pi$ represent the same angle.\\
\indent This periodicity motivates the following Fourier series expansion,
\begin{equation}
p(\theta)=\frac{1}{2\pi}+\sum_{m\in Z^*}C_mexp(im\theta),
\end{equation}
with
\begin{equation}
C_m=C^*_{-m},
\end{equation}
where the integer $m$ is the summation index, the asterisk denotes the complex conjugate and $i=\sqrt{-1}$ is the imaginary unit. The constant term $\frac{1}{2\pi}$ ensures that the normalization condition is satisfied.\\
\indent In the limiting case that $N\xrightarrow{}\infty$, the leading order term of the square CMM error $e^2_0$ due to deviation of the geometric center can be approximated as
\begin{equation}
e^2_0=\frac{4w^2}{9}\frac{\sum_{i=1}^N \tan^2(\frac{\tilde{\theta_i}}{2})}{\pi^2} \approx \frac{w^2}{9}\frac{\sum_{i=1}^N \tilde{\theta_i}^2}{\pi^2},
\end{equation}
where $\tilde{\theta_i}$ is the difference between two adjacent angles $\theta_{i+1}$ and $\theta_i$ as defined in the appendix.\\
\indent In order to derive the expectation of $e^2_0$, the distribution of $\tilde{\theta_i}$ denoted as $f(\tilde{\theta_i};N,p(\theta_i))$ will be considered first. $f(\tilde{\theta_i};N,p(\theta_i))$ is a nearest neighbor distribution, which satisfies the following integral equation \cite{NN_distribution},
\begin{equation}
f(\tilde{\theta_i})=2Np(\theta_i)(1-\int^{\tilde{\theta_i}}_0f(\tau)d\tau),
\end{equation}
where the dependence on the parameters $N$ and $p(\theta_i)$ will be omitted hereafter.\\
\indent Together with the normalization condition, the solution to (31) is
\begin{equation}
f(\tilde{\theta_i})=2Np(\theta_i)exp(-2Np(\theta_i)\tilde{\theta_i}).
\end{equation}
\indent The number of vehicles $N$ and the local density of the road angles $p(\theta_i)$ appear as parameters in the distribution. As the product $Np(\theta_i)$ increases, the angles distributed around $\theta_i$ become dense, thus increasing the probability of small differential angle $\tilde{\theta_i}$. The expectation of $\tilde{\theta_i}^2$ is
\begin{equation}
E[\tilde{\theta_i}^2]=\int^\infty_0 \tilde{\theta_i}^2 f(\tilde{\theta_i}) d\tilde{\theta_i}=\frac{1}{2N^2p^2(\theta_i)}.
\end{equation}
\indent Combining (30) and (33), the expectation of $e^2_0$ can be derived as follows,
\begin{equation}
E[e^2_0]=\frac{w^2}{9\pi^2}\sum ^N_{i=1} E[\tilde{\theta_i}^2]=\frac{w^2}{36N\pi^2}\sum^N_{i=1}\frac{1}{p^2(\theta_i)}\frac{2\pi}{N}.
\end{equation}
\indent The summation in (34) can be interpreted as an integration as the number of the vehicles $N$ goes to infinity,
\begin{equation}
\begin{aligned}
\lim\limits_{N \to \infty }{\sum^N_{i=1}\frac{1}{p^2(\theta_i)}\frac{2\pi}{N}}&=\lim\limits_{N \to \infty }{\sum^N_{i=1}\frac{1}{p^2(\theta_i)}\Delta \theta}\\
&=\int^{2\pi}_{0}\frac{1}{p^2(\theta)}d\theta.
\end{aligned}
\end{equation}
\indent The Fourier expansion of $\frac{1}{p^2(\theta)}$ can be obtained and expressed in terms of the Fourier coefficients of $p(\theta)$, assuming the deviation from the uniform distribution is infinitesimal,
\begin{equation}
\begin{aligned}
\frac{1}{p^2(\theta)}&=4\pi^2-8\pi^3\sum_{m\in Z^*}C_mexp(im\theta)\\
&+16\pi^4(\sum_{m\in Z^*}C_mexp(im\theta))^2+O(C^3_m).
\end{aligned}
\end{equation}
\indent Substituting (36) into (35),
\begin{equation}
\lim\limits_{N \to \infty }{\sum^N_{i=1}\frac{1}{p^2(\theta_i)}\frac{2\pi}{N}}=8\pi^3+32\pi^5 \sum^{\infty}_{m=1}|C_m|^2.
\end{equation}
\indent Substituting (37) into (34),
\begin{equation}
E[e^2_0]\sim \frac{w^2}{36N\pi^2}(8\pi^3+32\pi^5 \sum^{\infty}_{m=1}|C_m|^2).
\end{equation}
\indent By taking $C_m=0, m\in Z^+$, which corresponds to $p(\theta)=\frac{1}{2\pi}$ (uniform distribution), the expectation of the square error $E[e^2_0]$ is minimized and (25) is also recovered.
\end{proof}

\section{Evaluation of CMM accuracy with safety pilot data}
\begin{figure}[htbp]
\includegraphics[width=\columnwidth]{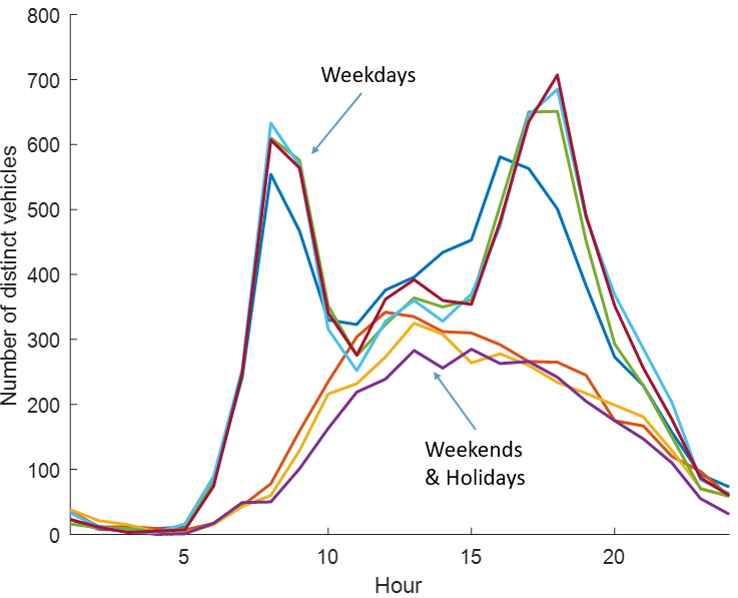}
\caption{Number of recorded vehicles at different hours on different days within a week}
\label{fig. numb_vehicle}
\end{figure}
\begin{figure*}[htbp]
\centering
\includegraphics[width=0.9\textwidth]{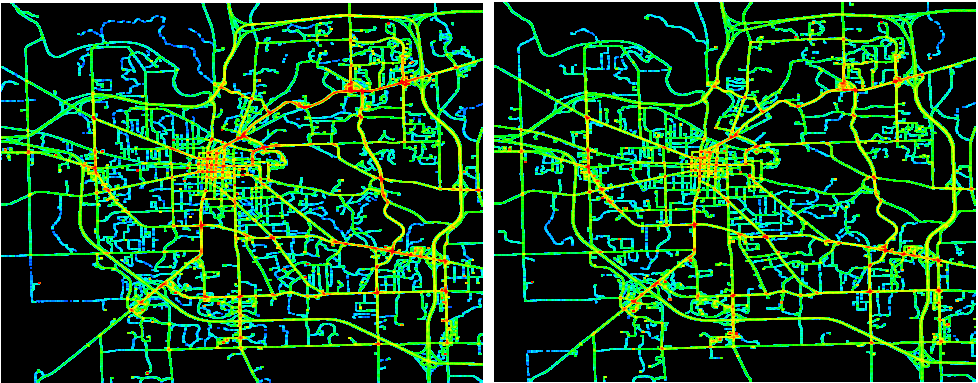}
\caption{Vehicle density distribution on two successive weekdays; Red: dense; Green: medium; Blue: sparse}
\label{fig. density}
\end{figure*}
In this section, the localization accuracy of CMM, using the Safety Pilot program data that were collected in Ann Arbor over three years time period, will be evaluated based on (24).\\ 
\indent The average square error $E_{\theta}[E_X[e^2_0]]$, which is a measurement of localization accuracy, depends on the distribution of the road angles and the number of connected vehicles. These two components vary spatially due to the limited communication radius. It would be sufficient for the evaluation of the localization accuracy at any point if the locations and the road angles of all the vehicles were given. As a result, the root mean square error $J \triangleq \sqrt{E_{\theta}[E_X[e^2_0]]}$ is a functional of the joint distribution of all the vehicle locations and the road angles as well as a function of the location on the map,
\begin{equation}
J(x)=J(x;p(\mathbf{x_v,\Theta})),
\end{equation}
where $x$ shows the location where the localization accuracy is evaluated and $\mathbf{x_v}=[x_1,x_2,...x_N]^T$ denotes the locations of all the vehicles in Ann Arbor and $\mathbf{\Theta}=[\theta_1,\theta_2,...,\theta_N]^T$ denotes the corresponding road angles. It is worth noting that $J(x)$ varies with time as both the joint distribution $p(\mathbf{x_v,\Theta}))$ and the total number of vehicles $N$ varies with time. For example, $N$ will be large at daytime and small at midnight. In order to evaluate $J(x)$ at some given time, e.g., from 8 a.m. to 9 a.m., it is required to estimate the distribution $p(\mathbf{x_v,\Theta}))$ and the number of vehicles $N$ from historical data. The SPMDP1 dataset \cite{SP}, which records a 3-year history of the locations and driving angles of 3000 vehicles in Ann Arbor, is used to estimate the distribution as well as the total number of vehicles, assuming that the recorded vehicles comprise 3\% of the total vehicle population in Ann Arbor. \\
\indent Examples of the temporal distribution of the recorded vehicles are shown in Fig. \ref{fig. numb_vehicle}. On weekdays, there are two peaks at 8 a.m. and 5 p.m., respectively. On weekends and holidays, there are only one peak appearing at 2 p.m.. At midnight, the number of recorded vehicles significantly drops down. This sparsity of vehicles might lead to a significantly biased estimation of the vehicle spatial distribution. In order to reduce this bias, the spatial distribution is obtained by averaging over every day's distribution at the same hour or by averaging over the 24 hours within a day. \\
\indent The spatial distributions of vehicles over two successive weekdays are shown in Fig. \ref{fig. density}. These two distributions looks similar, which indicates the bias due to vehicle sparsity has been mitigated through the averaging.\\
\indent After an unbiased estimation of the vehicle locations and the road angles distribution has been obtained, samples of $(\mathbf{x_v,\Theta})$ are drawn, and $J(x)$ is approximated as the average performance over the sample realizations.

\begin{figure*}[htbp]
\includegraphics[width=\textwidth]{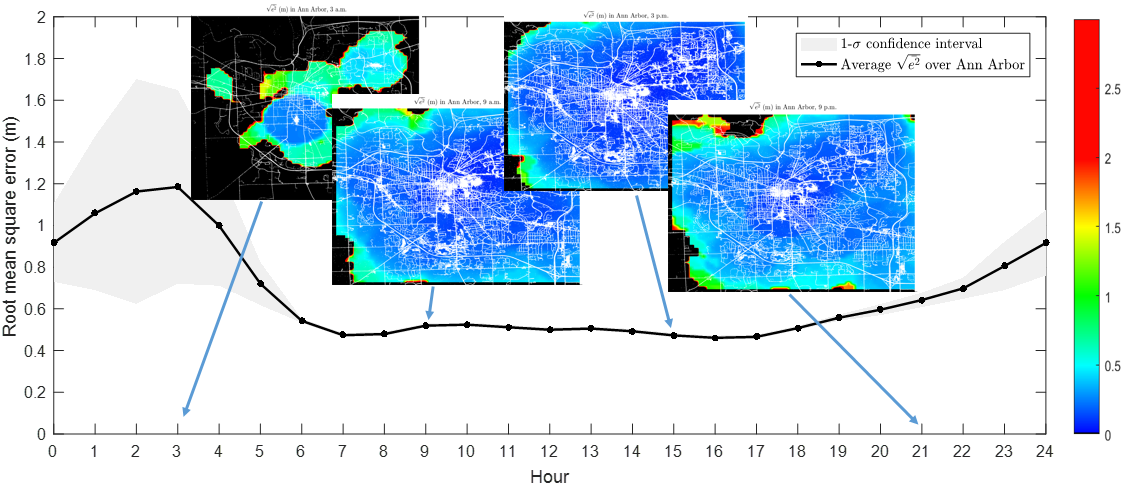}
\caption{Average root mean square CMM error over Ann Arbor at different hours of a day.}
\label{fig. hour}
\includegraphics[width=\textwidth]{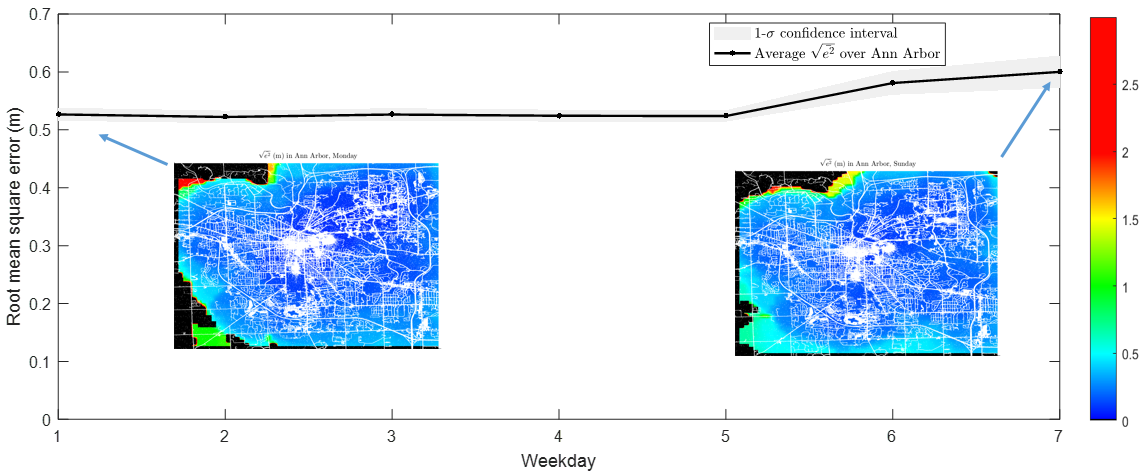}
\caption{Average root mean square CMM error over Ann Arbor at different days of a week}
\label{fig:day}
\includegraphics[width=\textwidth]{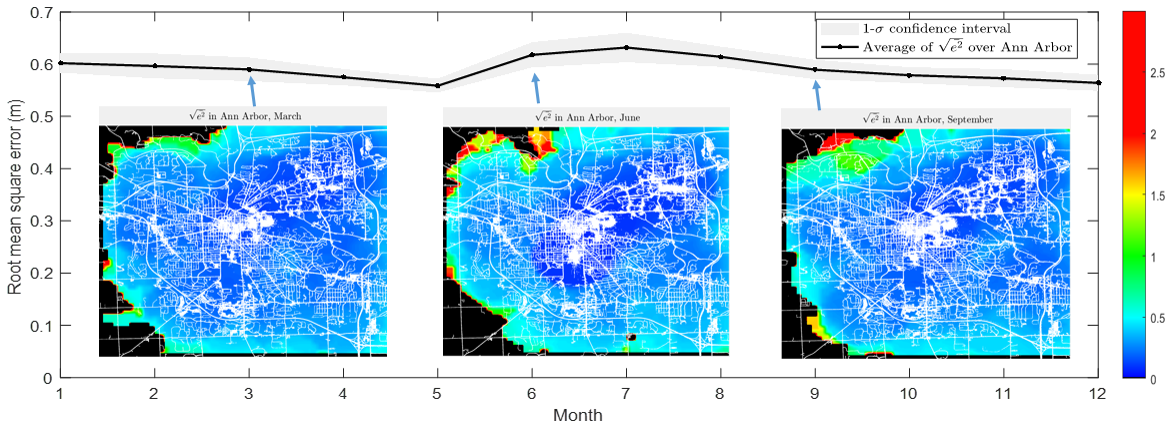}
\caption{Average root mean square CMM error over Ann Arbor at different months of a year}
\label{fig:month}
\end{figure*}
\indent The average root mean square CMM error with its $1-\sigma$ bound at the different time within a day is plotted in Fig. \ref{fig. hour}. The communication radius is assumed to be 1 mile and the non-common error variance is assumed to be a constant $0.5\mbox{ }m^2$. As expected, the CMM error is large at midnight due to the lack of connected vehicles. The variance of the error is also large which means that the performance of CMM is unstable if the number of connected vehicles is inadequate. In contrast, both the performance and the stability of the performance are much better during the daytime when there is a sufficient number of vehicles connected. Fig. \ref{fig. hour} also shows the spatial distribution of the CMM error, with blue indicating a small error, red indicating a large error and black indicating inadequate vehicles for robust localization. In the central area where the vehicles are concentrated and the road angles are diverse, the CMM error is much smaller compared to the marginal area.\\ 
\indent Similarly, the average root mean square CMM error on different days of a week is shown in Fig. \ref{fig:day}, and that on different months of a year is shown in Fig. \ref{fig:month}.\\
\indent The means and variances plotted in Fig. \ref{fig:day}-\ref{fig:month} are listed in the Appendix.
%
%
\section{CONCLUSIONS}
In this paper, the impact of road configuration on the CMM localization accuracy is studied theoretically, which is then used to evaluate the CMM accuracy in real world. The key findings and contributions in this paper are summarized as follow:
\begin{enumerate}
 \item A closed form expression of the CMM error in terms of the road angles and GNSS error is derived based on a simple CMM rule, based on which the asymptotic decay of the CMM error is established analytically. 
 \item The uniform distribution of the road angles is the optimal distribution that minimizes the CMM error on average.
 \item The CMM error in Ann Arbor is predicted based on actual traffic flow data collected in the SPMDP1 dataset.
 \end{enumerate}



\section*{APPENDIX}
\begin{table*}[htbp] 
\caption{Root mean square CMM error (m)} 
\centering      
\begin{tabular}{c c c c c c c c c c c c c c c c c c c c c c c c c}  
\hline\hline                        
Hour &1 &2 &3 &4 &5 &6 &7 &8 &9 &10 &11 &12 \\ [0.1ex]
\hline
Mean (m) &0.916 &1.058 &1.162 &1.185 	&0.999 &0.720 &0.543 &0.474 &0.479 &0.520 &0.524 &0.512\\
Std. (m) &0.187 &0.369 &0.539 &0.464 &0.288 &0.103 &0.010 &0.010 &0.010 &0.010 &0.010 &0.010\\
\hline \hline
Hour &13 &14 &15 &16 &17 &18 &19 &20 &21 &22 &23 &24\\
\hline
Mean (m) &0.500 &0.506 &0.493 &0.473 &0.461 &0.466 &0.508 &0.558 &0.596 &0.642 &0.698 &0.805\\
Std. (m) &0.010 &0.010 &0.010 &0.010 &0.010 &0.010 &0.010 &0.015	&0.027 &0.034 &0.051 &0.116\\
\hline \hline
Month &Jan. &Feb. &Mar. &Apr. &May &Jun. &Jul. & Aug. & Sep. & Oct. & Nov. &Dec.\\ [0.1ex] 
\hline                    
Mean (m) &0.602 &0.596 &0.590 &0.575 &0.559 	&0.618	&0.632 &0.614 &	0.589 &0.579 &0.573 &0.564\\   
Std. (m) &0.018&	0.024 &0.023 &0.016 &0.013 & 0.024 &0.028 &0.019 &0.019 &0.015 &0.017 &	0.015\\ [0.1ex]
\hline \hline    
Weekday &Mon. &Tue. &Wed. &Thu. &Fri. &Sat. &Sun.\\ [0.1ex] 
\hline 
Mean (m) &0.527 &0.522 &0.527 &0.524 &0.524 &0.581 &0.600\\
Std. (m) &0.011 &0.011 &0.011 &0.010 &0.010 &0.021 &0.028\\
\hline \hline
\end{tabular}
\end{table*}
\indent The detail with respect to the derivation of (25) and (26) is presented here.\\
\indent The direction angles of the roads are denoted by $[\theta_1,\theta_2,...,\theta_N]$. Without loss of generality, it is assumed that $0\leq \theta_1 \leq \theta_2 \leq...\leq \theta_N \leq 2\pi$. The increments of the angles are defined as
\begin{equation}
\tilde{\theta_i}=\begin{cases}
\theta_{i+1}-\theta_i & \text{for }i=1,2,...N-1\\
\theta_1-\theta_N+2\pi & \text{for }i=N\\
\end{cases}
\end{equation}
$e^2_0$ can be expressed in terms of the geometric quantities:
\begin{equation}
\begin{aligned}
e^2_0&=\frac{(\sum_{i=1}^N \frac{2}{3}w^3tan(\frac{\tilde{\theta_i}}{2})cos(\theta_i))^2}{S^2_0}\\
&+\frac{(\sum_{i=1}^N \frac{2}{3}w^3tan(\frac{\tilde{\theta_i}}{2})sin(\theta_i))^2}{S^2_0}\\
&=\frac{4w^2}{9}\frac{\sum_{i=1}^N \tan^2(\frac{\tilde{\theta_i}}{2})}{\pi^2}\\
&+\frac{4w^2}{9}\frac{\sum_{i=1}^N\sum_{j=1}^N tan(\frac{\tilde{\theta_i}}{2})tan(\frac{\tilde{\theta_j}}{2})cos(\theta_j-\theta_i)}{\pi^2}\\
&+Higher\mbox{ } Order\mbox{ } Terms
\end{aligned}
\end{equation}
\indent The probability distribution of $\tilde{\theta_i}$ can be derived:
\begin{equation}
p(\tilde{\theta}_i)=\frac{N}{\pi}(1-\frac{\tilde{\theta_i}}{\pi})^{N-1}, \mbox{ }0\leq \tilde{\theta_i}\leq \pi.
\end{equation}
\indent Accurate to the leading order:
\begin{equation}
E_{\theta}[e^2_0]=\frac{4w^2}{9\pi^2}E_{\theta}[\sum_{i=1}^N tan^2(\frac{\tilde{\theta_i}}{2})]=\frac{4w^2N}{9\pi^2}E_{\theta}[tan^2(\frac{\tilde{\theta_i}}{2})]
\end{equation}
\indent The expectation in (30) can be calculated: 
\begin{dmath}
E_{\theta}[tan^2(\frac{\tilde{\theta_i}}{2})]=\int^{\pi}_0tan^2(\frac{\tilde{\theta_i}}{2})\frac{N}\pi(1-\frac{\tilde{\theta_i}}{\pi})^{N-1}d\tilde{\theta_i}\\
=\int^{\frac{\pi}{\sqrt{N}}}_0(\frac{\tilde{\theta_i}}{2})^2\frac{N}\pi(1-\frac{\tilde{\theta_i}}{\pi})^{N-1}d\tilde{\theta_i}\\
+\int^{\frac{\pi}{\sqrt{N}}}_0[tan^2(\frac{\tilde{\theta_i}}{2})-(\frac{\tilde{\theta_i}}{2})^2]\frac{N}\pi(1-\frac{\tilde{\theta_i}}{\pi})^{N-1}d\tilde{\theta_i}\\
+\int^{\frac{\pi}{2}}_{\frac{\pi}{\sqrt{N}}}tan^2(\frac{\tilde{\theta_i}}{2})\frac{N}\pi(1-\frac{\tilde{\theta_i}}{\pi})^{N-1}d\tilde{\theta_i}\\
+\int^{\pi}_{\frac{\pi}{2}}tan^2(\frac{\tilde{\theta_i}}{2})\frac{N}\pi(1-\frac{\tilde{\theta_i}}{\pi})^{N-1}d\tilde{\theta_i}.
\end{dmath}
\indent The first term after the last equality in (31) can be integrated analytically and shown that
\begin{equation}
\int^{\frac{\pi}{\sqrt{N}}}_0(\frac{\tilde{\theta_i}}{2})^2\frac{N}\pi(1-\frac{\tilde{\theta_i}}{\pi})^{N-1}d\tilde{\theta_i}\sim \frac{\pi^2}{2N^2}.
\end{equation}\\
\indent The remaining three term will be shown as $o(\frac{1}{N^2})$ terms:
\begin{equation}
\begin{aligned}
&|\int^{\frac{\pi}{\sqrt{N}}}_0[tan^2(\frac{\tilde{\theta_i}}{2})-(\frac{\tilde{\theta_i}}{2})^2]\frac{N}\pi(1-\frac{\tilde{\theta_i}}{\pi})^{N-1}d\tilde{\theta_i}|\\
&<\int^{\frac{\pi}{\sqrt{N}}}_0(\frac{\tilde{\theta_i}}{2})^4\frac{N}\pi(1-\frac{\tilde{\theta_i}}{\pi})^{N-1}d\tilde{\theta_i}\\
&<\frac{\pi^2}{4N}\int^{\frac{\pi}{\sqrt{N}}}_0(\frac{\tilde{\theta_i}}{2})^2\frac{N}\pi(1-\frac{\tilde{\theta_i}}{\pi})^{N-1}d\tilde{\theta_i}\\
&=O(\frac{1}{N^3}).
\end{aligned}
\end{equation}
\begin{equation}
\begin{aligned}
&|\int^{\frac{\pi}{2}}_{\frac{\pi}{\sqrt{N}}}tan^2(\frac{\tilde{\theta_i}}{2})\frac{N}\pi(1-\frac{\tilde{\theta_i}}{\pi})^{N-1}d\tilde{\theta_i}|\\
&<\int^{\frac{\pi}{2}}_{\frac{\pi}{\sqrt{N}}}\frac{N}\pi(1-\frac{\tilde{\theta_i}}{\pi})^{N-1}d\tilde{\theta_i}\\
&=(1-\frac{1}{\sqrt{N}})^N-(\frac{1}{2})^N.
\end{aligned}
\end{equation}
\indent As
\begin{equation}
\begin{aligned}
\lim_{N\to\infty} N^2(1-\frac{1}{\sqrt{N}})^N&=\lim_{N\to\infty} N^4((1-\frac{1}{N})^N)^N\\
&=\lim_{N\to\infty} N^4e^{-N}=0,
\end{aligned}
\end{equation}
it follows that
\begin{equation}
\int^{\frac{\pi}{2}}_{\frac{\pi}{\sqrt{N}}}tan^2(\frac{\tilde{\theta_i}}{2})\frac{N}\pi(1-\frac{\tilde{\theta_i}}{\pi})^{N-1}d\tilde{\theta_i}=o(\frac{1}{N^2}).\\
\end{equation}
\begin{equation}
\begin{aligned}
&|\int^{\pi}_{\frac{\pi}{2}}tan^2(\frac{\tilde{\theta_i}}{2})\frac{N}\pi(1-\frac{\tilde{\theta_i}}{\pi})^{N-1}d\tilde{\theta_i}|\\
&<\int^{\pi}_{\frac{\pi}{2}}\frac{N}\pi(1-\frac{\tilde{\theta_i}}{\pi})^{N-3}d\tilde{\theta_i}\\
&<\int^{\pi}_{\frac{\pi}{2}}\frac{N}\pi(\frac{1}{2})^{N-3}d\tilde{\theta_i}=\frac{N}{2}(\frac{1}{2})^{N-3}=o(\frac{1}{N^2}).\\
\end{aligned}
\end{equation}
\begin{equation}
\begin{aligned}
&E_{\theta}[\frac{1}{S^2_0}tr(L^TC^TCL)]=E_{\tilde{\theta}}[\frac{1}{S^2_0}tr(L^TC^TCL)]\\
&=\sum_{i=1}^N\frac{2\sigma^2_i(E_{\tilde{\theta}}[tan^2(\frac{\tilde{\theta}_i}{2})]+E_{\tilde{\theta}}^2[tan(\frac{\tilde{\theta}_i}{2}])}{\pi^2}+H.O.T.\\
&=\frac{3\sum_{i=1}^N \sigma^2_i}{2N^2}+o(\frac{1}{N}).\\
\end{aligned}
\end{equation}

\bibliography{ITS}
\bibliographystyle{IEEEtran}

\begin{IEEEbiography}[{\includegraphics[width=1in,height=1.25in,clip,keepaspectratio]{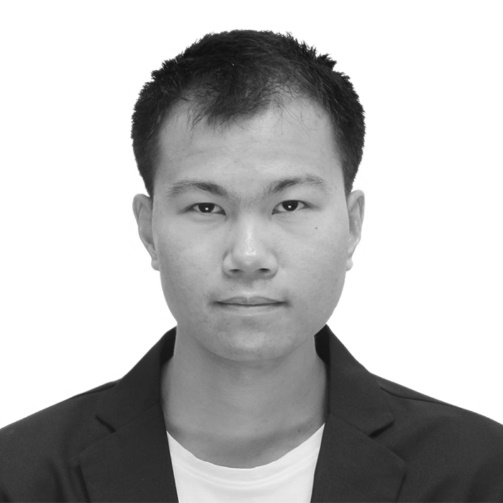}}]{Macheng Shen}
Macheng Shen received his B.S. degree in 2015 from Shanghai Jiao Tong University and his M.S.E. degree in 2016 from the University of Michigan, Ann Arobr. He is currently a Ph.D. student with the department of Naval Architecture and Marine Engineering in the University of Michigan, Ann Arbor. His research interests include connected vehicle localization, Bayesian inference and machine learning.   
\end{IEEEbiography}
\begin{IEEEbiography}[{\includegraphics[width=1in,height=1.25in,clip,keepaspectratio]{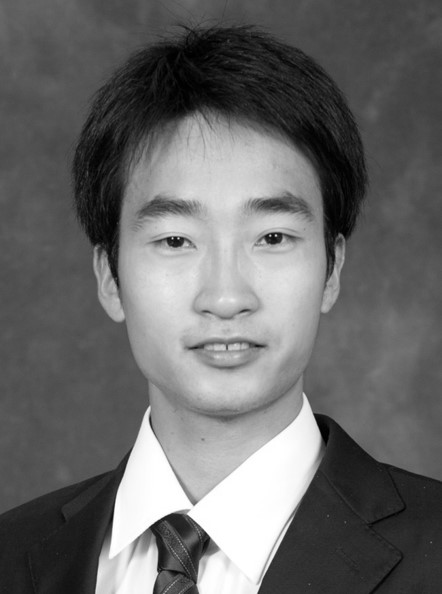}}]{Ding Zhao}
Ding Zhao received the Ph.D. degree in 2016 from the University of Michigan, Ann Arbor. He is currently an Assistant Research Scientist in the Department of Mechanical Engineering at the University of Michigan. His research interest includes evaluation of connected and automated vehicles, vehicle dynamic control, driver behaviors modeling, and big data analysis
\end{IEEEbiography}
\begin{IEEEbiography}[{\includegraphics[width=1in,height=1.25in,clip,keepaspectratio]{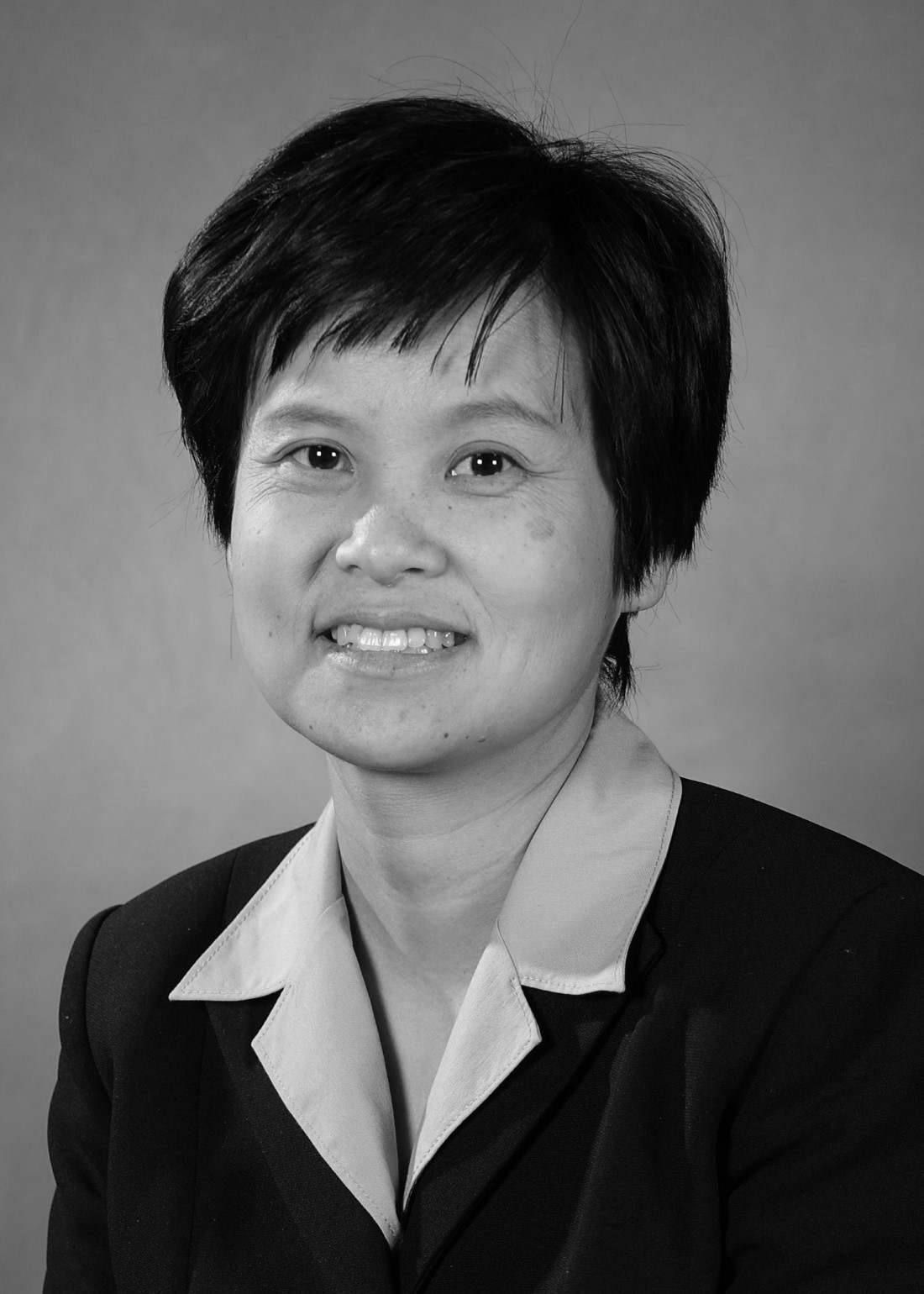}}]{Jing Sun}
Jing Sun received her Ph. D. degree from University of Southern California in 1989, and her B. S. and M. S. degrees from University of Science and Technology of China in 1982 and 1984 respectively. From 1989-1993, she was an assistant professor in Electrical and Computer Engineering Department, Wayne State University. She joined Ford Research Laboratory in 1993 where she worked in the Powertrain Control Systems Department. After spending almost 10 years in industry, she came back to academia and joined the faculty of the College of Engineering at the University of Michigan in 2003, where she is now Micheal G. Parsons Professor in the Department of Naval Architecture and Marine Engineering, with courtesy appointments in the Department of Electrical Engineering and Computer Science and Department of Mechanical Engineering. Her research interests include system and control theory and its applications to marine and automotive propulsion systems. She holds 39 US patents and has co-authored a textbook on Robust Adaptive Control. She is an IEEE Fellow and a recipient of the 2003 IEEE Control System Technology Award.
\end{IEEEbiography}
\end{document}